\renewcommand\appendix{\section*{Appendix}}
\begin{document}
%%%%%%%%%%%%%%%%%%%%%%%%%%%%%%%%%%%%%
\newtheorem{definition}{Definition}
\newtheorem{theorem}{Theorem}
\newtheorem{example}{Example}
\newtheorem{corollary}{Corollary}
\newtheorem{lemma}{Lemma}
\newtheorem{proposition}{Proposition}
\newenvironment{proof}{{\bf Proof:\ \ }}{\qed}
\newcommand{\qed}{\rule{0.5em}{1.5ex}}
\newcommand{\bfg}[1]{\mbox{\boldmath $#1$\unboldmath}}

\begin{center}

\section*{A Generalization of the Power Law Distribution with Nonlinear Exponent}

\vskip 0.2in {\sc \bf Faustino Prieto\footnote{Corresponding author. Tel.: +34 942 206758; fax: +34 942 201603. E-mail address: faustino.prieto@unican.es (F. Prieto).}, Jos\'e Mar\'{\i}a Sarabia
\vskip 0.2in

{\small\it Department of Economics, University of Cantabria, Avenida de los Castros s/n, 39005 Santander, Spain.
}\\
}

\end{center}

\begin{abstract}\noindent
The power law distribution is usually used to fit data in the upper tail of the distribution. However, commonly it is not valid to model data in all the range. In this paper, we present a new family of distributions, the 
so-called Generalized Power Law (GPL),  which can be useful for modeling data in all the range and possess power law tails. To do that, we model the exponent of the power law using a non-linear function which depends on data and two parameters. Then, we provide some basic properties and some specific models of that new family of distributions. After that, we study a relevant model of the family, with special emphasis on the quantile and hazard functions, and the corresponding estimation and testing methods. Finally, as an empirical evidence, we study how the debt is distributed across municipalities in Spain. We check that power law model is only valid in the upper tail; we show analytically and graphically the competence of the new model with municipal debt data in the whole range; and we compare the new distribution with other well-known distributions including the Lognormal, the Generalized Pareto, the Fisk, the Burr type XII and the Dagum models.
\end{abstract}
\vskip 0.2in

\noindent {\bf Key Words}: Power law behavior; Whole range fitting; Complexity; Municipal debt

\section{Introduction}

Many empirical analysis of diverse real phenomena (the population of the cities, the annual income of the people,
the solar flare intensity, the failures in power grids, the protein interaction degree, etc) have confirmed the power
law behavior in the upper tail of their distributions - the largest values of the variable of interest, above a certain lower bound, can be modeled
statistically by a classical Pareto distribution, with shape parameter $\alpha$ also known as exponent of the power law or simply constant $\alpha$ (see, for example, \cite{Pinto,Clauset,Newman2005,Clementi2006,Rosas}).
That empirical evidence comes with many advantages: it can help us to understand the
underlying data generating process \cite{Mayo}; it gives us tools for computer simulation of those phenomena \cite{Kelton2000}; etc.

However, Pareto distribution is not usually valid to model those real phenomena in the whole range -
if we consider high, medium and low ranges of those data all together, the power law behaviour usually disappears.
For example: failures in power grids can be described by the Lomax distribution \cite{Prieto2014,Cuadra2015};
or in the case of protein interaction networks of three species (C.elegans, S.cerevisiae and E.coli), or in the case of the metabolic networks with human and yeast data,
the lognormal distribution provides the best description for the empirical data \cite{{Stumpf2005}}, in the whole range.

The Pareto distribution hierarchy, composed by Pareto type I (Power Law), Pareto type II (with Lomax distribution as a special case),
Pareto type III and type IV, is a well known extension of the power law \cite{Arnold1983,Arnold2015}.  Those family of distributions, also known as Generalized Pareto distributions, have
extended the scope of the classical Pareto model, as for example, with the failures in power grids and the Lomax distribution, as mentioned previously.

The aim of this study is twofold. Firstly, to explore the properties of a new family of GPL distributions that we could use to model real phenomena in the whole range, phenomena with power law tail. Secondly, to provide empirical evidence of the efficacy of those distributions with real datasets.
Our primary hypothesis was that Pareto shape parameter, the exponent $\alpha$, is not constant and varies according to a non-linear function $g$ which depends on data
\cite{Sarabia2009}. We found a surprisingly rich family of distributions, with only three parameters, which includes Pareto and Pareto Positive Stable (PPS) distributions as special cases,
and we also found that a new distribution, a relevant model of that family, is a good alternative for modeling debt data of the indebted municipalities in Spain in the whole range.

The rest of this paper is organized as follows: in Section 2, we introduce a new family of GPL distributions;
in Section 3, we present a new distribution, which belongs to that new family;
an empirical application of that new distribution to municipal debt with Spanish data is included in Section 4;
finally, the conclusions are given in Section 5.

\section{A new family of Generalized Power Law distributions}\label{secfamily}

In this section we obtain the new family of Generalized Power Law (GPL) distributions. Our idea is to construct an extension of the Power law, where the exponent is not constant and is modeled by using a non-linear function of the data. Then, let consider a real function $g:(1,\infty)\rightarrow\mathbb{R}^+$ continuous, positive and differentiable on $(1,\infty)$ satisfying the following conditions,
\begin{equation}\label{c1}
\displaystyle\lim_{z \to 1^+}z^{g(z)}=1\;\mbox{and}\;\displaystyle\lim_{z \to\infty}z^{g(z)}=\infty,
\end{equation}
and
\begin{equation}\label{c2}
\displaystyle\frac{g'(z)}{g(z)}>\displaystyle\frac{-1}{z\log(z)},\;\forall\;z>1.
\end{equation}
Now, using $g(\cdot)$, we define the function
\begin{equation}\label{cdf}
F(x)=1-\left(\frac{x}{\sigma}\right)^{-g(x/\sigma)},\hspace{0.1cm}x>\sigma,
\end{equation}
and $F(x)=0$ if $x\leq\sigma$. Note that $\sigma$ is a scale parameter. We have the following Theorem.
\begin{theorem}
Let consider the functional form defined in (\ref{cdf}), where the function $g(\cdot)$ satisfies conditions (\ref{c1}) and (\ref{c2}). Then (\ref{cdf}) is a genuine cumulative distribution function (cdf).
\end{theorem}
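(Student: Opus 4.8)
The plan is to verify the three defining properties of a cumulative distribution function for $F$ as given in~(\ref{cdf}): (i) the correct limiting behavior at the boundaries ($F\to 0$ as $x\to\sigma^+$ and $F\to 1$ as $x\to\infty$), (ii) monotonicity (non-decreasing), and (iii) right-continuity. Since $F(x)=0$ for $x\le\sigma$ by definition, and $F$ is built from the continuous, differentiable function $g$ composed with the smooth map $x\mapsto x/\sigma$, continuity (hence right-continuity) on $(\sigma,\infty)$ will be immediate; the only delicate continuity point is at $x=\sigma$, where I must check that $F(x)\to 0$ as $x\to\sigma^+$ so that $F$ glues continuously to the zero branch.

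First I would substitute $z=x/\sigma$ to reduce everything to the function $z\mapsto 1-z^{-g(z)}$ on $(1,\infty)$, so the scale parameter $\sigma$ drops out and the conditions~(\ref{c1}) and~(\ref{c2}) apply directly. For the boundary behavior, I would use the first condition~(\ref{c1}): the limit $\lim_{z\to 1^+}z^{g(z)}=1$ gives $F\to 1-1=0$ at the lower endpoint (matching the $x\le\sigma$ branch), and $\lim_{z\to\infty}z^{g(z)}=\infty$ gives $z^{-g(z)}\to 0$, hence $F\to 1$ at infinity. These two computations dispose of property~(i) and simultaneously settle the continuity of $F$ at $x=\sigma$.

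The main obstacle will be monotonicity, and this is precisely where condition~(\ref{c2}) earns its place. The strategy is to write $z^{-g(z)}=\exp\bigl(-g(z)\log z\bigr)$ and differentiate: since $\exp$ is positive and increasing, $F(z)=1-\exp\bigl(-g(z)\log z\bigr)$ is non-decreasing exactly when the exponent $h(z):=g(z)\log z$ is non-decreasing, i.e. when $h'(z)\ge 0$. Computing $h'(z)=g'(z)\log z + g(z)/z$ and factoring out the positive quantity $g(z)\log z$ (recall $g>0$ and $\log z>0$ for $z>1$), the sign condition $h'(z)>0$ rearranges precisely into
\begin{equation}\label{myeq:monotone}
\frac{g'(z)}{g(z)}+\frac{1}{z\log z}>0,
\end{equation}
which is condition~(\ref{c2}). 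Thus~(\ref{c2}) is designed so that the density is positive and $F$ strictly increases on $(1,\infty)$; I would present this differentiation and rearrangement as the crux of the argument, being careful that every term is well-defined for $z>1$ (where $\log z>0$ and $g(z)>0$).

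Finally I would assemble the pieces: $F$ is non-negative, non-decreasing, right-continuous on all of $\mathbb{R}$, with the correct limits $0$ and $1$, and therefore is a genuine cdf. A small remark worth including is that conditions~(\ref{c1}) and~(\ref{c2}) together guarantee not merely a valid cdf but an absolutely continuous one on $(\sigma,\infty)$ with a well-defined positive density obtained by differentiating $F$, which connects the theorem to the density-level discussion that presumably follows.
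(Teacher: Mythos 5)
Your proposal is correct and follows essentially the same route as the paper: verify the boundary limits and right-continuity directly from condition (\ref{c1}), and obtain monotonicity by differentiating and rearranging into condition (\ref{c2}) (your inequality $g'(z)\log z + g(z)/z>0$ is, after multiplying by $z>0$, exactly the paper's condition $g(z)+z\log(z)\,g'(z)>0$). The only cosmetic difference is that you write $F$ in exponential form and track the exponent, while the paper differentiates $F$ directly; the computation is the same.
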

\begin{proof}
It is direct to check that $F(-\infty)=0$, $F(\infty)=1$ and $F(x)$ is right continuous. Finally, $F(x)$ is nondecreasing since,
$$g\left(\frac{x}{\sigma}\right)+\frac{x}{\sigma}\log\left(\frac{x}{\sigma}\right)\;g'\left(\frac{x}{\sigma}\right)>0,\;\forall x>\sigma,$$
using condition (\ref{c2}).
\end{proof}

The family of distributions define in Eq.(\ref{cdf}) includes the classical Pareto distribution (also known as Power Law or Pareto type I distribution)  \cite{Arnold1983,Pareto1964} as a special case when $g(z)=\alpha,\; \forall \;z> 1$ with $\alpha>0$.

\subsection{Basic Properties}\label{basics}

The survival function $S(x)=\Pr(X>x)=1-F(x)$ is given by:
$$S(x)=\left(\frac{x}{\sigma}\right)^{-g(x/\sigma)},\;x>\sigma,$$
and $S(x)=1$ if $x\leq\sigma$.

The probability density function (pdf) of that family of distributions is given by,
$$f(x)=\displaystyle\frac{dF(x)}{dx}=\left\{g\left(\frac{x}{\sigma}\right)+\left(\frac{x}{\sigma}\right)\log\left(\frac{x}{\sigma}\right)\;g'\left(\frac{x}{\sigma}\right)\right\}\frac{S(x)}{x},$$
if $x>\sigma$ and $f(x)=0$ if $x\leq\sigma$.

The hazard function $h(x)=\frac{f(x)}{\Pr(X>x)}=\frac{f(x)}{S(x)}$ is as follows:
$$h(x)=\displaystyle\frac{g\left(\displaystyle\frac{x}{\sigma}\right)+\left(\displaystyle\frac{x}{\sigma}\right)\log\left(\displaystyle\frac{x}{\sigma}\right)g'\left(\displaystyle\frac{x}{\sigma}\right)}{x},$$
if $x>\sigma$ and $h(x)=0$ if $x\leq\sigma$. Some graphics of this family  are included in Section \ref{relevant} for a relevant special case.

\subsection{Some models of generalized power law and extensions}

In this section we present some specific models of Generalized Power Law distributions and we also provide some extensions of that new family of distributions. To model the $g(\cdot)$ function, we choose some flexible functions which depend on two parameters $\alpha$ and $\beta$, and that include as special case the constant function by setting $\beta=0$.

Table \ref{distexamples} provides some models of Generalized Power Law distributions, where we have reported the $g(z)$ function, the survival function and the pdf. The simplest choice, that is, $g(z)=\alpha$, corresponds to the usual power law, or classical Pareto distribution. The choice $g(z)=\alpha\log^\beta(z)$ corresponds to the PPS distribution \cite{Sarabia2009,Guillen2011}. As far as we know, the rest of models are new.  

\begin{table}[p]\scriptsize
\renewcommand{\tablename}{\footnotesize{Table}}
\caption{\label{distexamples}\footnotesize Some examples of distributions which belongs to the new family of distributions described.}
\centering
\begin{tabular}{@{}l c l l @{}}
\toprule
$g(z)$&  $\beta$&$ S(x)$& $f(x)$\\
\midrule
$\alpha$&&
$\left(\displaystyle\frac{x}{\sigma}\right)^{-\alpha}$&
$\alpha\displaystyle\frac{S(x)}{x}$\\[3mm]

$\alpha\log^\beta(z)$&
$\beta>-1$&
$\left(\displaystyle\frac{x}{\sigma}\right)^{-\alpha\log^\beta(x/\sigma)}$&
$\alpha(\beta+1)\log^\beta(x/\sigma)\displaystyle\frac{S(x)}{x}$\\[3mm]

$\alpha z^\beta$&
$\beta\geq0$&
$\left(\displaystyle\frac{x}{\sigma}\right)^{-\alpha (x/\sigma)^\beta}$&
$\alpha [1+\beta\log(x/\sigma)]\left(\displaystyle\frac{x}{\sigma}\right)^\beta\displaystyle\frac{S(x)}{x}$\\[3mm]

$\alpha+\beta \log z$&
$\beta\geq0$&
$\left(\displaystyle\frac{x}{\sigma}\right)^{-\alpha-\beta \log(x/\sigma)}$&
$[\alpha+2\beta\log(x/\sigma)]\displaystyle\frac{S(x)}{x}$\\[3mm]

$\alpha+\beta z$&
$\beta\geq0$&
$\left(\displaystyle\frac{x}{\sigma}\right)^{-\alpha-\beta (x/\sigma)}$&
$[\alpha+\beta(x/\sigma)(1+\log(x/\sigma))]\displaystyle\frac{S(x)}{x}$\\[3mm]

$\alpha-\beta\left(\displaystyle\frac{z-1}{z\log z}\right)$&
$\beta\leq\alpha$&
$\left(\displaystyle\frac{x}{\sigma}\right)^{-\alpha+\beta\left[\frac{(x/\sigma)-1}{(x/\sigma)\log(x/\sigma)}\right]}$&
$\left[\alpha-\displaystyle\frac{\beta}{(x/\sigma)}\right]\displaystyle\frac{S(x)}{x}$\\[5mm]

$\alpha-\displaystyle\frac{\beta}{z}$&
$\beta\leq\alpha$&
$\left(\displaystyle\frac{x}{\sigma}\right)^{-\alpha+\beta\sigma/x}$&
$\left[\alpha+\displaystyle\frac{\beta}{(x/\sigma)}(\log(x/\sigma)-1)\right]\displaystyle\frac{S(x)}{x}$\\[5mm]

$\alpha+\beta\left(\displaystyle\frac{z-1}{\log(z)}\right)$&
$\beta\geq0$&
$\left(\displaystyle\frac{x}{\sigma}\right)^{-\alpha-\beta\left[\frac{(x/\sigma)-1}{\log(x/\sigma)}\right]}$&
$\left[\alpha+\beta(x/\sigma)\right]\displaystyle\frac{S(x)}{x}$\\[5mm]

$\alpha+\beta\left(\displaystyle\frac{\log z}{1+\log z}\right)$&
$\beta\geq-\alpha$&
$\left(\displaystyle\frac{x}{\sigma}\right)^{-\alpha-\beta\left[\frac{\log(x/\sigma)}{\log(x/\sigma)+1}\right]}$&
$\left[\alpha+\beta\displaystyle\frac{\log(x/\sigma)[\log(x/\sigma)+2]}{[\log(x/\sigma)+1]^2}\right]\displaystyle\frac{S(x)}{x}$\\[5mm]

$\alpha+\beta\left(\displaystyle\frac{z-1}{z}\right)$&
$\beta\geq-\alpha$&
$\left(\displaystyle\frac{x}{\sigma}\right)^{-\alpha-\beta\left[\frac{(x/\sigma)-1}{(x/\sigma)}\right]}$&
$\left[\alpha+\beta\displaystyle\frac{\log(x/\sigma)-1+(x/\sigma)}{(x/\sigma)}\right]\displaystyle\frac{S(x)}{x}$\\[5mm]

$\alpha\left(\displaystyle\frac{\log z}{1+\log z}\right)^\beta$&
$\beta>-1$&
$\left(\displaystyle\frac{x}{\sigma}\right)^{-\alpha\left[\frac{\log(x/\sigma)}{\log(x/\sigma)+1}\right]^{\beta}}$&
$\alpha\left[\displaystyle\frac{\log(x/\sigma)+1+\beta}{\log(x/\sigma)+1}\right]\left[\displaystyle\frac{\log(x/\sigma)}{\log(x/\sigma)+1}\right]^\beta\displaystyle\frac{S(x)}{x}$\\[5mm]

$\alpha\left(\displaystyle\frac{z-1}{z}\right)^\beta$&
$\beta\geq-1$&
$\left(\displaystyle\frac{x}{\sigma}\right)^{-\alpha\left[\frac{(x/\sigma)-1}{(x/\sigma)}\right]^{\beta}}$&
$\alpha\left[\displaystyle\frac{(x/\sigma)-1+\beta\log(x/\sigma)}{(x/\sigma)-1}\right]\left[\displaystyle\frac{(x/\sigma)-1}{(x/\sigma)}\right]^\beta\displaystyle\frac{S(x)}{x}$\\[5mm]
\bottomrule
\end{tabular}
\end{table}

On the other hand, we can consider some extensions of these models. These extensions can be obtained using the Pareto types II or IV models \cite{Arnold2015,Arnold2008,Arnold2014}, instead of the usual classical Pareto distribution. In these extensions, we incorporate a new location parameter or new location and shape parameters, respectively, and the support of the distribution is $(\mu,\infty)$,where $\mu\ge 0$. In this situation, the $g(\cdot)$ function is continuous, positive and differentiable on the interval $(0,\infty)$ and satisfy:
\begin{equation}\label{eq4}
\displaystyle\lim_{z \to 0^+}(1+z)^{g(z)}=1\;\mbox{and}\;\displaystyle\lim_{z \to\infty}(1+z)^{g(z)}=\infty,
\end{equation}
and
\begin{equation}\label{eq5}
\displaystyle\frac{g'(z)}{g(z)}>\frac{-1}{(1+z)\log(1+z)},\;\forall\;z>0.\\[2ex]
\end{equation}

In the case of Pareto II distribution, the new family of distributions is defined in terms of the cdf,
$$F(x;\mu,\sigma)=1-\left[1+\left(\displaystyle\frac{x-\mu}{\sigma}\right)\right]^{-g\{(x-\mu)/\sigma\}},\;x>\mu,$$
and $F(x)=0$ if $x\leq\mu$, where $\mu$ is a location parameter, $\sigma>0$ is a scale parameter and $g(\cdot)$ satisfies conditions (\ref{eq4}) and (\ref{eq5}).

In the case of the Pareto IV distribution, the new family of  distributions is given by,
$$F(x;\mu,\sigma,\gamma)=1-\left[1+\left(\displaystyle\frac{x-\mu}{\sigma}\right)^{\displaystyle(1/\gamma)}\right]^{-g\{[(x-\mu)/\sigma]^{1/\gamma}\}},\;x>\mu,$$
and $F(x)=0$ if $x\leq\mu$, where $\mu$ is a location parameter, $\sigma>0$ is a scale parameter, $\gamma>0$ a shape parameter and $g(\cdot)$ satisfies again conditions (\ref{eq4}),(\ref{eq5}).

\section{A relevant model}\label{relevant}

In this section we study a relevant model of Generalized Power Law distribution. This model corresponds to the choice $g(z)=\alpha\left(\frac{\log z}{1+\log z}\right)^\beta$ in Table \ref{distexamples}, and the cdf is given by,
\begin{equation}\label{gplcdf}
F(x;\alpha,\beta,\sigma)=1-\exp\left\{-\alpha\displaystyle\frac{[\log(x/\sigma)]^{\beta+1}}{[\log(x/\sigma)+1]^\beta}\right\},\;x\geq\sigma,
\end{equation}
and $F(x)=0$ if $x<\sigma$, where $\alpha>0$ and $\beta>-1$ are shape parameters, and $\sigma>0$ is a scale parameter. A random variable with cdf given by Eq.(\ref{gplcdf}) will be denoted by $X\sim {\cal GPL}(\alpha,\beta,\sigma)$. This family includes the classical Pareto distribution (Power Law) when $\beta=0$. We have ${\cal GPL}(\alpha,0,\sigma)\equiv{\cal P}a(\alpha,\sigma)$.

The concept of tail equivalent (see \cite{Embrechts,Focardi,Sornette,Resnick2007,Castillo2012,Klugman,LeCourtois}) is satisfied by the ${\cal GPL}(\alpha,\beta,\sigma)$ distribution. 
The following Theorem shows that the ${\cal GPL}(\alpha,\beta,\sigma)$ distribution exhibit a power law behaviour when $x$ is large.
\begin{theorem}
The ${\cal GPL}(\alpha,\beta,\sigma)$ distribution, defined in Eq.(\ref{gplcdf}), and the Pareto distribution are right tail equivalent
\end{theorem}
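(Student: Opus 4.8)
The plan is to verify the standard definition of right tail equivalence directly: two distributions are right tail equivalent precisely when the ratio of their survival functions converges to a positive finite constant as $x\to\infty$. The survival function of the Pareto distribution ${\cal P}a(\alpha,\sigma)$ is $S_{Pa}(x)=(x/\sigma)^{-\alpha}=\exp\{-\alpha\log(x/\sigma)\}$, while from Eq.(\ref{gplcdf}) the survival function of ${\cal GPL}(\alpha,\beta,\sigma)$ is $S_{GPL}(x)=\exp\{-\alpha[\log(x/\sigma)]^{\beta+1}/[\log(x/\sigma)+1]^\beta\}$. I would therefore form the ratio $S_{GPL}(x)/S_{Pa}(x)$, rewrite it as a single exponential, and reduce the whole problem to computing the limit of the exponent.

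Writing the exponent explicitly and setting $u=\log(x/\sigma)$ (so that $u\to\infty$ as $x\to\infty$), the logarithm of the ratio becomes
\[
\alpha u-\alpha\frac{u^{\beta+1}}{(u+1)^\beta}=\alpha u\left[1-\left(\frac{u}{u+1}\right)^\beta\right].
\]
The key step is then to evaluate $\lim_{u\to\infty}\alpha u[1-(u/(u+1))^\beta]$. This is an indeterminate form of type $\infty\cdot 0$, so I would resolve it by substituting $t=1/u$ and writing $u/(u+1)=1/(1+t)$, which turns the bracket into $t^{-1}[1-(1+t)^{-\beta}]$. Using the first-order expansion $(1+t)^{-\beta}=1-\beta t+O(t^2)$ gives $t^{-1}[\beta t+O(t^2)]=\beta+O(t)\to\beta$, so the exponent converges to $\alpha\beta$.

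Consequently $\lim_{x\to\infty}S_{GPL}(x)/S_{Pa}(x)=e^{\alpha\beta}$, which is finite and strictly positive for every $\alpha>0$ and $\beta>-1$. This shows that the two survival functions are asymptotically proportional and hence that ${\cal GPL}(\alpha,\beta,\sigma)$ and ${\cal P}a(\alpha,\sigma)$ are right tail equivalent. The only delicate point is the indeterminate limit above; once it is handled by the first-order expansion (or equivalently by L'H\^opital's rule applied to $[1-(1+t)^{-\beta}]/t$ as $t\to 0^+$), the remaining manipulations are routine algebra. It is worth noting as a sanity check that when $\beta=0$ the constant is $e^{0}=1$, consistent with the fact that ${\cal GPL}(\alpha,0,\sigma)\equiv{\cal P}a(\alpha,\sigma)$.
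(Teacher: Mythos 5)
Your proof takes the same route as the paper's: both form the ratio of the two survival functions and evaluate its limit as $x\to\infty$. The difference is in the value of that limit, and yours is the correct one. The paper asserts
\begin{equation*}
\lim_{x \to\infty}\frac{\exp\left\{-\alpha\,[\log(x/\sigma)]^{\beta+1}/[\log(x/\sigma)+1]^\beta\right\}}{(x/\sigma)^{-\alpha}}=1,
\end{equation*}
but, exactly as your substitution $u=\log(x/\sigma)$ shows, the logarithm of this ratio is $\alpha u\bigl[1-(u/(u+1))^\beta\bigr]\to\alpha\beta$, so the limit is $e^{\alpha\beta}$, which equals $1$ only in the degenerate case $\beta=0$. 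A quick check with $\alpha=\beta=\sigma=1$ confirms this: the ratio is $\exp\{u/(u+1)\}\to e$, not $1$. Under the definition of tail equivalence used in the paper's own references (Embrechts et al., Resnick), which requires only that the limit be a finite positive constant, the theorem's conclusion still holds, and your argument — including the careful resolution of the $\infty\cdot 0$ indeterminacy via the expansion $(1+t)^{-\beta}=1-\beta t+O(t^2)$ — is the rigorous version of the proof; the paper's stated limit is simply wrong for $\beta\neq 0$. If one insisted on the constant being exactly $1$, the correct comparison distribution would be ${\cal P}a(\alpha,\sigma e^{\beta})$ rather than ${\cal P}a(\alpha,\sigma)$.
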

\begin{proof}
The proof is direct and it is based on the fact that,
$$\displaystyle\lim_{x \to\infty}\displaystyle\frac{1-F(x)}{1-G(x)}=
\displaystyle\lim_{x \to\infty}\frac{\exp\left\{-\alpha\frac{[\log(x/\sigma)]^{\beta+1}}{[\log(x/\sigma)+1]^\beta}\right\}}{(x/\sigma)^{-\alpha}}=1,$$
where $G(x)$ is the cumulative distribution function of the Pareto distribution.
\end{proof}

In the following Theorem we show the domain of attraction for maxima (see \cite{Clauset,Castillo1989,Bassi,Asimit,Cirillo,Jayakrishnan,Gorge,Resnick2013})
\begin{theorem}
The ${\cal GPL}(\alpha,\beta,\sigma)$ distribution belongs to the Maximum Domain of Attraction of the Fr\'echet distribution
${\cal GPL}(\alpha,\beta,\sigma)\in MDA(\Phi_{\alpha})$
\end{theorem}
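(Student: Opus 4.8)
The plan is to use the classical Gnedenko characterization of the Fr\'echet domain of attraction: a cdf $F$ satisfies $F\in MDA(\Phi_\alpha)$ if and only if its survival function $\bar{F}=1-F$ is regularly varying at infinity with index $-\alpha$, that is $\lim_{x\to\infty}\bar{F}(tx)/\bar{F}(x)=t^{-\alpha}$ for every $t>0$. Everything therefore reduces to establishing this regular-variation property for the ${\cal GPL}(\alpha,\beta,\sigma)$ survival function $\bar{F}(x)=\exp\{-\alpha[\log(x/\sigma)]^{\beta+1}/[\log(x/\sigma)+1]^\beta\}$.

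The shortest route exploits the tail equivalence already proved in the previous Theorem. Writing $\bar{G}(x)=(x/\sigma)^{-\alpha}$ for the Pareto survival function, which is exactly regularly varying of index $-\alpha$ (indeed $\bar{G}(tx)/\bar{G}(x)=t^{-\alpha}$ identically), I would decompose
$$\frac{\bar{F}(tx)}{\bar{F}(x)}=\frac{\bar{F}(tx)}{\bar{G}(tx)}\cdot\frac{\bar{G}(tx)}{\bar{G}(x)}\cdot\frac{\bar{G}(x)}{\bar{F}(x)}.$$
The outer two factors tend to $1$ by the tail-equivalence limit (applied at the arguments $tx$ and $x$, respectively), while the middle factor equals $t^{-\alpha}$. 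Hence $\bar{F}(tx)/\bar{F}(x)\to t^{-\alpha}$, so $\bar{F}$ is regularly varying with index $-\alpha$, and the Gnedenko criterion yields ${\cal GPL}(\alpha,\beta,\sigma)\in MDA(\Phi_\alpha)$. Equivalently, one may simply invoke the standard fact that right-tail equivalence preserves membership in a maximum domain of attraction, together with the textbook result that the Pareto law lies in $MDA(\Phi_\alpha)$.

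If a self-contained verification is preferred, I would instead compute the ratio directly. Setting $u=\log(x/\sigma)$ and $\phi(u)=u^{\beta+1}/(u+1)^\beta$, so that $\log\bar{F}(x)=-\alpha\,\phi(u)$, the log-ratio becomes $-\alpha[\phi(u+\log t)-\phi(u)]$, with the argument shifted by $\log t$. A short differentiation gives $\phi'(u)=(\beta+1)(u/(u+1))^\beta-\beta(u/(u+1))^{\beta+1}\to(\beta+1)-\beta=1$ as $u\to\infty$, and the mean value theorem then forces $\phi(u+\log t)-\phi(u)\to\log t$. This again delivers $\bar{F}(tx)/\bar{F}(x)\to t^{-\alpha}$.

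The only genuinely delicate point is that the crude asymptotic $\phi(u)\sim u$ is \emph{not} by itself sufficient: regular variation of the correct index requires controlling the \emph{difference} of the exponents, not merely their leading order, because a slowly varying factor is hidden in the ratio $\phi(u)/u$. The substantive step is therefore the limit $\phi'(u)\to 1$ (or, in the first route, the fact that the tail-equivalence constant is exactly $1$), which guarantees that this slowly varying factor contributes nothing to the limit and that the tail index is precisely $\alpha$. Once that is secured, the remainder is routine bookkeeping.
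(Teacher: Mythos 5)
Your proof is correct, and in both of its variants it is more complete than the paper's own argument. The paper also reduces the claim to the regular-variation criterion, but its entire verification consists of writing $1-F(x)=L(x)\,(x/\sigma)^{-\alpha}$ with $L(x)=(x/\sigma)^{-\alpha\{[\log(x/\sigma)/(\log(x/\sigma)+1)]^{\beta}-1\}}$ and asserting, without proof, that $L$ is slowly varying. Your first route makes that assertion precise: this $L(x)$ is exactly the ratio $\bar{F}(x)/\bar{G}(x)$, and your three-factor decomposition is the standard way of converting its convergence to a positive constant into regular variation of $\bar{F}$ with index $-\alpha$. Your second route (the computation $\phi'(u)\to 1$ plus the mean value theorem) is a genuinely self-contained verification that the paper does not supply, and it is the cleaner of the two because it does not lean on the earlier tail-equivalence theorem at all. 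One correction to your closing remark: the tail-equivalence constant does \emph{not} need to equal $1$ for the first route, and in fact it does not --- with $u=\log(x/\sigma)$ one finds $\bar{F}(x)/\bar{G}(x)=\exp\{-\alpha u[(u/(u+1))^{\beta}-1]\}\to e^{\alpha\beta}$, so the limit asserted in the paper's preceding theorem is really $e^{\alpha\beta}$ rather than $1$ (they coincide only when $\beta=0$). This does not damage your argument, because any finite positive constant cancels between the two outer factors of your decomposition; it only means that the ``substantive step'' of route 1 is that the constant is finite and positive, not that it equals $1$. Route 2 is unaffected and is the version I would keep.
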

\begin{proof}
We must check that $1-F(x)$ is of regular variation of index $-\alpha$,
\begin{equation*}
\displaystyle\lim_{x \to\infty}\displaystyle\frac{1-F(tx)}{1-F(x)}=t^{-\alpha},\forall t>0,
\end{equation*}
or in other words, $1-F(x)$ can be expressed as $L(x)x^{-\alpha}$ where  $L(x)$ is a slowly varying function ($\displaystyle\lim_{x \to\infty}L(tx)/L(x)=1$, for any  $t>0$):
\begin{equation*}
1-F(x)=(x/\sigma)^{-\alpha\left\{\left[\frac{\log(x/\sigma)}{\log(x/\sigma)+1}\right]^{\beta}-1\right\}}(x/\sigma)^{-\alpha}\sim L(x)x^{-\alpha},
\end{equation*}
which means that ${\cal GPL}(\alpha,\beta,\sigma)$ is a heavy-tailed distribution and, for that, it can be useful for statistical modeling of phenomena with extremely large observations.
\end{proof}

\subsection{Basic Properties}

The survival function $S(x)=\Pr(X>x)=1-F(x)$ is given by:
\begin{equation}\label{gplsf}
S(x)=\exp\left\{-\alpha\displaystyle\frac{[\log(x/\sigma)]^{\beta+1}}{[\log(x/\sigma)+1]^\beta}\right\},\;x\geq\sigma,
\end{equation}
and $S(x)=1$ if $x<\sigma$. Figure \ref{figsf} shows the survival function $S(x)$ of the ${\cal GPL}(\alpha,\beta,\sigma)$ distribution, given by Eq.(\ref{gplsf}), for different values of the shape parameters
$\alpha$ and $\beta$, in log-log scale.

\begin{figure}[p]
\renewcommand{\figurename}{\footnotesize{Figure}}
\begin{center}
\includegraphics[scale=0.67]{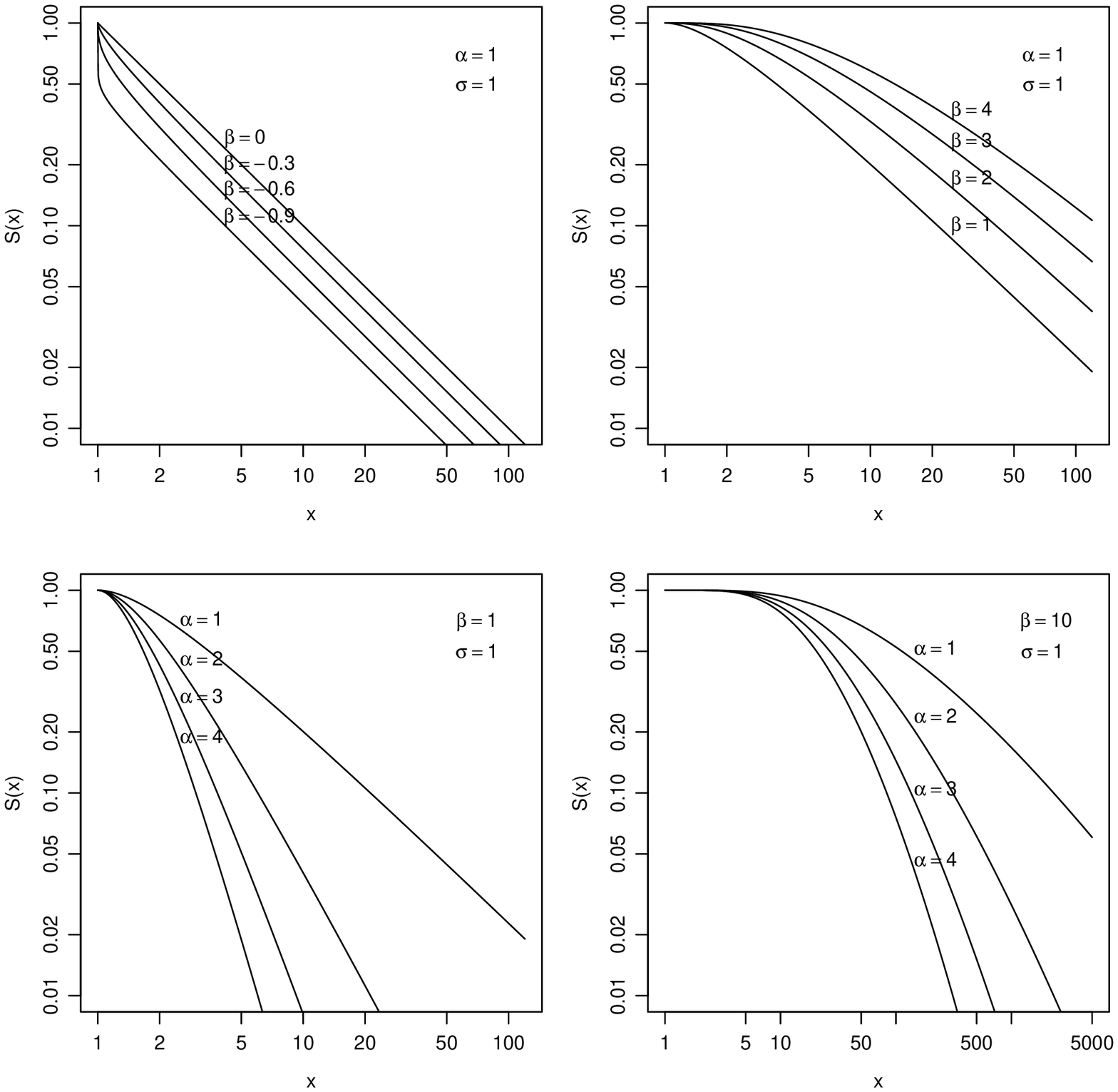}
\caption{\footnotesize{Plots of the survival function of the ${\cal GPL}(\alpha,\beta,\sigma)$ distribution with $\sigma=1$ and: (up-left) $\alpha=1$ and $\beta=-0.9,-0.6,-0.3,0$;
(up-right) $\alpha=1$ and $\beta=1,2,3,4$; (down-left) $\beta=1$ and $\alpha=1,2,3,4$; (down-right) $\beta=10$ and $\alpha=1,2,3,4$}}\label{figsf}
\end{center}
\end{figure}

The pdf of the ${\cal GPL}(\alpha,\beta,\sigma)$ distribution is given by,
\begin{equation}\label{gplpdf}
f(x)=\displaystyle\frac{\alpha}{x}\left[\displaystyle\frac{\log(x/\sigma)+1+\beta}{\log(x/\sigma)+1}\right]\left[\displaystyle\frac{\log(x/\sigma)}{\log(x/\sigma)+1}\right]^\beta
\exp\left\{-\alpha\displaystyle\frac{[\log(x/\sigma)]^{\beta+1}}{[\log(x/\sigma)+1]^\beta}\right\},\;x\geq\sigma
\end{equation}
and $f(x)=0$ if $x<\sigma$. Figure \ref{figpdf} shows the probability density function $f(x)$, given by Eq.(\ref{gplpdf}), for zero-modal and uni-modal curves.
Remark that ${\cal GPL}(\alpha,\beta,\sigma)$ distribution, as a distribution of the MDA  of the Fr\'echet distribution, satisfies the von Mises condition \cite{vonMises}:
$$\displaystyle\lim_{x \to\infty}\displaystyle\frac{x f(x)}{S(x)}=
\displaystyle\lim_{x \to\infty}
\alpha\left[\displaystyle\frac{\log(x/\sigma)+1+\beta}{\log(x/\sigma)+1}\right]\left[\displaystyle\frac{\log(x/\sigma)}{\log(x/\sigma)+1}\right]^\beta=\alpha>0.$$

\begin{figure}[p]
\renewcommand{\figurename}{\footnotesize{Figure}}
\begin{center}
\includegraphics[scale=0.67]{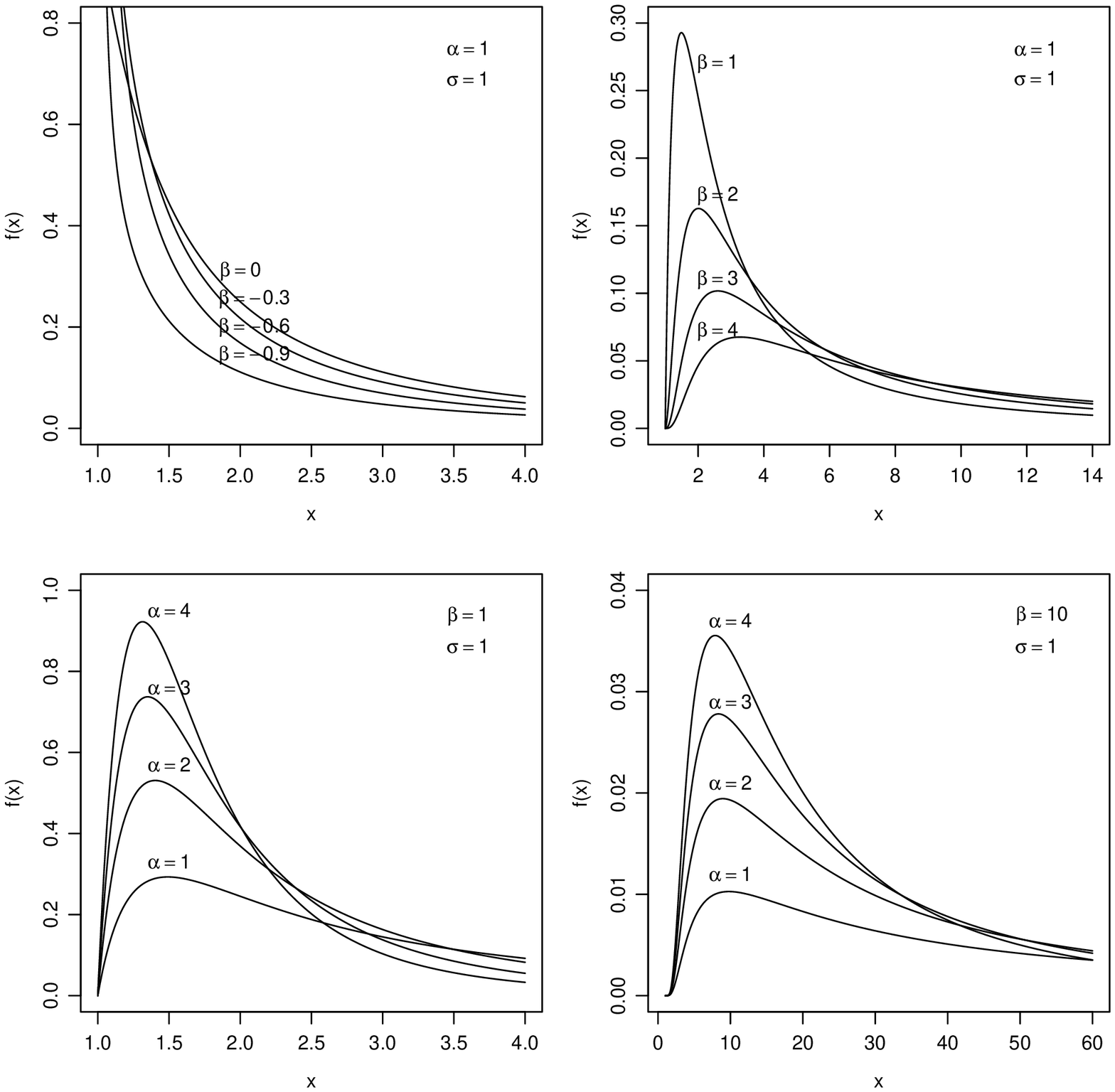}
\caption{\footnotesize{Plots of the probability density function of the ${\cal GPL}(\alpha,\beta,\sigma)$ distribution with $\sigma=1$ and: (up-left) $\alpha=1$ and $\beta=-0.9,-0.6,-0.3,0$;
(up-right) $\alpha=1$ and $\beta=1,2,3,4$; (down-left) $\beta=1$ and $\alpha=1,2,3,4$; (down-right) $\beta=10$ and $\alpha=1,2,3,4$}}\label{figpdf}
\end{center}
\end{figure}

The hazard function is given by (see also Section \ref{basics}):
\begin{equation*}
h(x)=\frac{f(x)}{\Pr(X>x)}=\frac{f(x)}{S(x)}
=\displaystyle\frac{\alpha}{x}\left[\displaystyle\frac{\log(x/\sigma)+1+\beta}{\log(x/\sigma)+1}\right]\left[\displaystyle\frac{\log(x/\sigma)}{\log(x/\sigma)+1}\right]^\beta,\;x\geq\sigma,
\end{equation*}
and $h(x)=0$ if $x<\sigma$. See Figure \ref{fighx} for different shapes.

\begin{figure}[t]
\renewcommand{\figurename}{\footnotesize{Figure}}
\begin{center}
\includegraphics[scale=0.67]{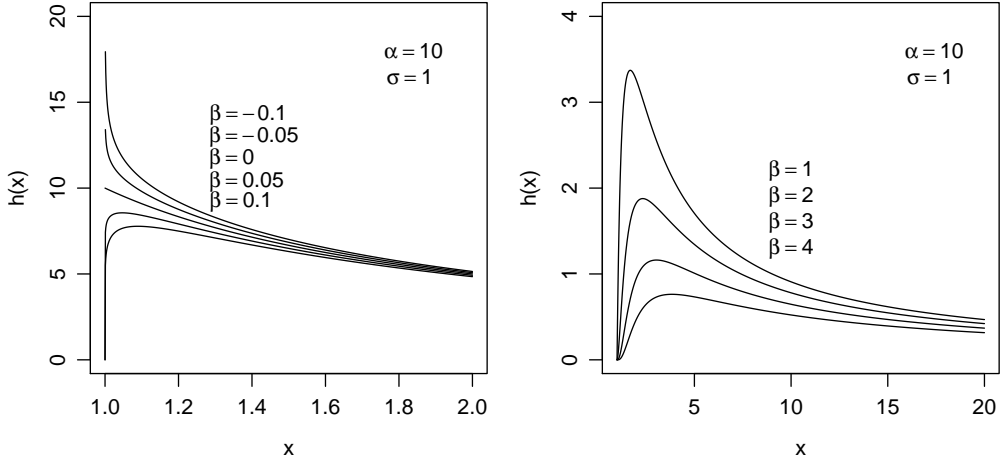}
\caption{\footnotesize{Plots of the hazard function of the ${\cal GPL}(\alpha,\beta,\sigma)$ distribution with $\sigma=1$ and: (left) $\alpha=10$ and $\beta=-0.1,-0.05,0,0.05,0.1$;
(right) $\alpha=10$ and $\beta=1,2,3,4$}}\label{fighx}
\end{center}
\end{figure}

The quantile function $Q(p)=F^{-1}(p)$ is defined implicitly as follows,
\begin{equation*}\label{quantile}
\displaystyle\frac{-\log(1-p)\left(\log\left[Q(p)/\sigma\right]+1\right)^\beta}{\left(\log\left[Q(p)/\sigma\right]\right)^{\beta+1}}=\alpha,\;\;0<p<1,
\end{equation*}
which can be used to simulate the random variable $X\sim{\cal GPL}(\alpha,\beta,\sigma)$ of our interest by the inverse transform method, from a random variable uniformly distributed $U\sim{\cal U}(0,1)$ and
$X=F^{-1}(U)$ \cite{Tyszer2012,Glasserman2003}.

\subsection{Estimation and Testing}

The ${\cal GPL}(\alpha,\beta,\sigma)$ distribution can be fitted using the method of maximum likelihood \cite{Fisher1922}.
Let $x_1,\dots, x_n$ be a sample of size $n$ drawn from a ${\cal GPL}(\alpha,\beta,\sigma)$ distribution.
The log-likelihood function can be expressed as follows,
\begin{equation}\label{loglikelihood}
\begin{split}
\log\ell(\alpha,\beta,\sigma)&=\sum_{i=1}^n\log f(x_i)=n\log(\alpha)-\sum_{i=1}^n\log(x_i)\\
&+\sum_{i=1}^n\log[\log(x_i/\sigma)+1+\beta]+\beta\sum_{i=1}^n\log[\log(x_i/\sigma)]\\
&-(\beta+1)\sum_{i=1}^n\log[\log(x_i/\sigma)+1]-\alpha\sum_{i=1}^n\displaystyle\frac{\log^{\beta+1}(x_i/\sigma)}{[\log(x_i/\sigma)+1]^\beta},
\end{split}
\end{equation}
where $(\alpha,\beta,\sigma)$ is the unknown parameter vector of the model, $f(x)$ is the pdf of the ${\cal GPL}(\alpha,\beta,\sigma)$ distribution defined in Eq.(\ref{gplpdf}), and the maximum
likelihood estimation of the parameter vector $(\hat{\alpha},\hat{\beta},\hat{\sigma})$ is the one that maximizes the likelihood function $\log\ell(\alpha,\beta,\sigma)$.

The normal equations can be obtained by taking partial derivatives of Eq.(\ref{loglikelihood}) with respect to $\alpha,\beta,\sigma$, and equating them to zero. If we use the auxiliar random variable
$W=\log(X/\sigma)$, and we represent its observed values by $w_i=\log(x_i/\sigma),i=1,\dots,n$, the normal equations can be expressed as:
\begin{equation}\label{partials}
\begin{gathered}
\frac{\partial\log\ell}{\partial\alpha}=0\Rightarrow \alpha=n\left[\sum_{i=1}^n\displaystyle\frac{w_i^{\beta+1}}{(w_i+1)^\beta}\right]^{-1},\\
\frac{\partial\log\ell}{\partial\beta}=0\Rightarrow
\sum_{i=1}^n
\displaystyle\frac{1}{w_i+1+\beta}
+\log\left(\displaystyle\frac{w_i}{w_i+1}\right)\left[1-\displaystyle\frac{\alpha w_i^{\beta+1}}{(w_i+1)^\beta}\right]=0,\\
\end{gathered}
\end{equation}
\begin{equation*}
\frac{\partial\log\ell}{\partial\sigma}=0\Rightarrow
\sum_{i=1}^n
\displaystyle\frac{\beta(\beta+1)[\log(w_i)+1]^\beta- \alpha w_i\log^{\beta+1} (w_i) [\log(w_i)+1+\beta]^2}
{w_i\log(w_i)[\log(w_i)+1]^{\beta+1}[\log(w_i)+1+\beta]}=0.
\end{equation*}

The previous equations (\ref{partials}) can be solved by numerical methods. In this study, maximum likelihood estimates of the parameters
$\alpha$, $\beta$ and $\sigma$ were computed by using the \verb|R| software function \verb|optimx| \cite{Rproject,Nash},
with the limited memory quasi-Newton L-BFGS-B algorithm (in which bounds contraints are permited) \cite{Byrd,Barros,Lange}
- for that, we took $\sigma_0$ equal to half of the smallest value of the sample, as the initial value of $\sigma$, and $\alpha_0,\beta_0$
the values obtained from the first two partial derivatives (in Eq.(\ref{partials})) just plugging $\sigma_0$ into them.

We can compare the ${\cal GPL}(\alpha,\beta,\sigma)$ distribution with other different models by using two model selection criteria: the Akaike information criterion ($AIC$), defined by \cite{Akaike},
$$AIC=-2\log L+2d;$$
or the Bayesian information criterion ($BIC$), defined by \cite{Schwarz}
\begin{equation}\label{bic}
BIC=\log L-\frac{1}{2}d\log n;
\end{equation}
where $\log L=\log\ell(\hat{\alpha},\hat{\beta},\hat{\sigma})$ is the log-likelihood (see Eq. \ref{loglikelihood}) of the model evaluated at the maximum likelihood estimates, $d$ is the number of parameters 
(in the case of the ${\cal GPL}(\alpha,\beta,\sigma)$ distribution, $d=3$) and $n$ is the number of data. The model chosen is the one with the smallest value of $AIC$ statistic or with the largest value of $BIC$ statistic.

We can use rank-size plots (on a log-log scale) for graphical model validation.
We can plot the complementary of the theoretical cdf (multiplied by $n + 1$) of the ${\cal GPL}(\alpha,\beta,\sigma)$ model together with the scatter plot of the points (observed data)
$\log(rank_i)$ versus $\log(x_{(i)}),\;i=1,\dots,n$, where $x_{(1)}\leq\dots\leq x_{(i)}\leq\dots\leq x_{(n)}$ is the ordered sample of $X$ and $rank_i=n+1-i$ \cite{Prieto2014}.

Finally, we can test the goodness-of-fit of the ${\cal GPL}(\alpha,\beta,\sigma)$ model by a Kolmogorov-Smirnov ($KS$) test method based on bootstrap resampling \cite{Clauset,Prieto2014,Efron,Wang,Babu,Kolmogorov,Smirnov} as follows:
(1) calculating the empirical $KS$ statistic of the ${\cal GPL}(\alpha,\beta,\sigma)$ model for the observed data,
$KS= \sup\;\lvert F_n(x_{i})-F(x_{i};\hat{\alpha},\hat{\beta},\hat{\sigma}) \rvert,\;i=1,2,\dots,n$,
where $F(x_{i};\hat{\alpha},\hat{\beta},\hat{\sigma})$ is the theoretical cdf of the ${\cal GPL}(\alpha,\beta,\sigma)$ model fitted by maximum likelihood, in a sample value, and
$F_n(x_{i})\approx(n+1)^{-1}\sum_{j=1}^n I_{[x_{j}\leq x_{i}]}$ is the empirical cdf in a sample value with the indicated plotting position formula \cite{Castillo2005};
(2) generate, by simulation, enough ${\cal GPL}(\alpha,\beta,\sigma)$ synthetic data sets (in this study, we generated 10000 data sets), with the same sample size $n$
- notice that the ${\cal GPL}(\alpha,\beta,\sigma)$ quantile function $Q(p)=F^{-1}(p)$ is defined implicitly, then, for this study, we used the \verb|R| software function \verb|uniroot|  \cite{Rproject};
(3) fit each ${\cal GPL}(\alpha,\beta,\sigma)$ synthetic data set by maximum likelihood and obtained its theoretical cdf;
(4) calculate the $KS$ statistic for each ${\cal GPL}(\alpha,\beta,\sigma)$ synthetic data set - with its own theoretical cdf;
(5) calculate the $p$-value as the fraction of ${\cal GPL}(\alpha,\beta,\sigma)$ synthetic data sets with a $KS$ statistic greater than the empirical $KS$ statistic;
(6) null hypothesis {\it $H_0$: the data follow the ${\cal GPL}(\alpha,\beta,\sigma)$ model} can be rejected with the 0.1 level of significance if $p$-value$<0.1$.

\section{Empirical application to municipal debt in Spain}\label{Application}

In this section, as an illustration, we show that ${\cal GPL}(\alpha,\beta,\sigma)$ distribution can be useful for modeling Spanish municipalities debt.

\subsection{The data}

We considered debt data of the indebted municipalities in Spain.
There are three levels of government in Spain: the State, the Autonomous Communities and the Local Entities \cite{DGCL,Benito2004}. Municipalities belong to the third one - as a reference, there were 8117 municipalities in Spain in 2014 \cite{INE}.
The expenditure of those councils, directed at providing essential local services to their citizens (street cleaning,
local police, etc.), is financed through different sources: transfers, local taxes, public fares, etc.
For several reasons (infrastructure investment, etc.), they can decide to contract debt - taking into account
the municipal debt control of the institutional borrowing restrictions  \cite{Montesinos2000,Alba2003,Sole2006,Cabases2007,Bastida2009,Hita2011,GarciaSanchez2011,GarciaSanchez2012,Lopez2012,Almendral2013,Benito2015}.
Our data sets were composed of information of Spanish indebted municipalities, whose debt was at least one thousand euros, dated on the 31st of December of each year, in the period 2008-2014,
expressed in thousand of euros ($k\euro{}$), published by the Spanish Ministry of the Finance and Public Administrations (see \cite{MFPA}).

Table \ref{data} show the main empirical characteristics of the variable of our interest: the number of Spanish indebted municipalities analyzed ($n$); the total amount of borrowing of those indebted municipalities; the debt of the most indebted council; the minimum value of debt considered; the mean and standard deviation (in $k\euro{}$); the skewness and kurtosis of that municipal debt.
\begin{table}[htbp]\scriptsize
\renewcommand{\tablename}{\footnotesize{Table}}
\caption{\label{data}\footnotesize Some relevant information about the datasets considered.}
\centering
\setlength{\tabcolsep}{1 mm}
\begin{tabular}{@{}l c c c c c c c @{}}
\toprule
Year                                                                          	& 2008           		& 2009    			& 2010 			& 2011		 	& 2012		 	 & 2013 			& 2014   \\
\midrule
Indebted Municip. ($n$)	              				& 4,981          		& 5,083	    		& 5,039     		& 4,979		 	& 5,059		 	& 5,028 			& 4,668   \\
Total Amount ($k\euro{}$)   					& $25.2\times10^6$	& $28.1\times10^6$	& $28.5\times10^6$	& $28.2\times10^6$	& $35.2\times10^6$	& $34.9\times10^6$	& $31.3\times10^6$\\
Maximum ($k\euro{}$)                      				& $6.7\times10^6$	& $6.8\times10^6$	& $6.5\times10^6$	& $6.3\times10^6$	& $7.4\times10^6$	& $7.0\times10^6$	& $5.9\times10^6$\\
Minimum  ($k\euro{}$)     						& 1				& 1				& 1				& 1				& 1				& 1				& 1.78		\\
Mean       ($k\euro{}$)     	 		     			& 5,059.2			& 5,532.2 	 		& 5,649.1	 		& 5,655.6 	 		& 6,950.6 	 		& 6,942.4	 		 & 6,715.4 	 \\
Std. Dev. ($k\euro{}$)     	     					& 97,714.1		& 98,603.0		& 95,643.6		& 94,555.1		& 109,764.6		& 104,657.1		& 92,645.3  \\
Skewness        			    	 			     	& 64.5			& 64.1			& 61.6			& 61.3			& 61.7			& 60.9			& 57.0  \\
Kurtosis 			    			 		     	& 4,384.5			& 4,378.4 			& 4,105.7 			& 4,072.0 			& 4,138.9 			& 4,051.3 			 & 3,604.9   \\
\bottomrule
\end{tabular}
\end{table}

\subsection{Power Law behavior in the upper tail}

We analyzed the power law behavior of the Spanish municipal debt. For that, we followed the methodology proposed in Clauset et al. \cite{Clauset2009}, based on:
(1) the maximum likelihood method, for fitting the Pareto distribution to the data - in this case, the maximum likelihood estimator for the scale parameter is the minimum value of the sample: $\hat{\sigma}=x_{min}$,
and the maximum likelihood estimator for the shape parameter $\hat{\alpha}$ is the Hill estimator \cite{Hill1975} given by
\begin{equation*}\label{aic}
\hat{\alpha}=n\left[\sum_{i=1}^n\log(x_i/x_{min})\right]^{-1};
\end{equation*}
(2) the Kolmogorov-Smirnov ($KS$) test method based on bootstrap resampling, for testing the goodness-of-fit of Pareto model;
and (3), for estimating the lower bound $x_{min}$ of the power law behavior, an iterative algorithm where  $x_{min}$ is given by the minimum sample value in which the null hypothesis {\it $H_0$: the data follow a power law model} can't be rejected at 0.1 level of significance.

Table \ref{respl} shows, for each year: the shape parameter estimates $\hat{\alpha}$ obtained from the datasets analyzed; the corresponding scale parameter estimates $\hat{\sigma}$, which give us the minimum local debt that follows the power law behavior (in $k\euro{}$); the number of municipalities that follow that behavior; the empirical $KS$ statistics and the $p$-values obtained. It can be seen that power law behavior is only valid in the upper tail of the distribution - only the largest debts can be modeled with a classical Pareto distribution - since null hypothesis {\it $H_0$: the data follow a power law model} can be rejected at the 0.1 level of significance for values of $x_{min}$ less than $\hat{\sigma}$ and, in particular, it can be rejected if we considered the whole range of the distribution.

In addition, table \ref{respl} shows that shape parameter estimates $\hat{\alpha}$ are very close to 1
(Zipf's law for many authors, \cite{Brakman,Gabaix1999a,Gabaix1999b,Urzua,Ioannides,Fujiwara,Gabaix2004,Anderson,Cordoba})
for the first four years analyzed (2008-2011), and that they change in 2012 (likewise, $\hat{\sigma}$ and $n$)
- coinciding with the political scene change after the Spanish municipal, regional and general elections held on 2011.

\begin{table}[h]\scriptsize
\renewcommand{\tablename}{\footnotesize{Table}}
\caption{\label{respl}\footnotesize Parameter estimates ($\hat{\alpha},\hat{\sigma}$)  from the Power Law model to the upper tail of the local debt datasets by maximum likelihood; number of municipalities ($n$) with the largest debts, which follow a power law behaviour; empirical $KS$ statistics; and bootstrap $p$-values for that model (values of $p < 0.1$ indicate that the models can be ruled out with the 0.1 level of significance).}
\centering
\setlength{\tabcolsep}{1.8 mm}
\begin{tabular}{@{}l c c c c c c c @{}}
\toprule		
Year                                                                         			 & 2008           	& 2009    		& 2010 		& 2011 		& 2012 		& 2013 		 & 2014   \\
\midrule	
$\hat{\alpha}$: shape parameter estimates	              		& 0.9981      	& 1.0116    	& 0.9990     	& 1.0207 		& 0.8557 		& 0.8455 		& 0.8322   \\
$\hat{\sigma}$: lower bound ($k\euro{}$) , scale par. estim.  	& 9253         	& 10582 		& 10692   		& 12563  		& 4677		& 5014 		& 4101 \\
$n$: size (municipalities) of the upper tail	                  		& 343          	& 348   		& 345     		& 298    		& 760 		& 700 		& 742   \\
Empirical $KS$ statistics						                 & 0.0513    	& 0.0505   	& 0.0505     	& 0.0529    	& 0.0350 		& 0.0364 		& 0.0351   \\
$p$-value 	($>0.1$ favor power law model)				& 0.1037 	  	& 0.1100 	  	& 0.1114 	  	& 0.1293 	  	& 0.1056 	  	& 0.1050 	  	& 0.1084  \\
\bottomrule
\end{tabular}
\end{table}

Figure \ref{fig04} shows, as a graphical model validation, the rank-size plots (on log-log scale) in the selected years 2008, 2011 and 2014, for the
whole range of the datasets (left) and for the upper tail of the distribution (right). Those plots confirm, graphically, that power law model can be ruled out as an adequate model for the whole range of indebted municipalities, and that power law model may serve as an adequate model for municipalities with largest debts above a certain lower bound, in accordance with Table \ref{respl}.

\begin{figure}[p]
\renewcommand{\figurename}{\footnotesize{Figure}}
\begin{center}
\includegraphics*[width=1.0\textwidth]{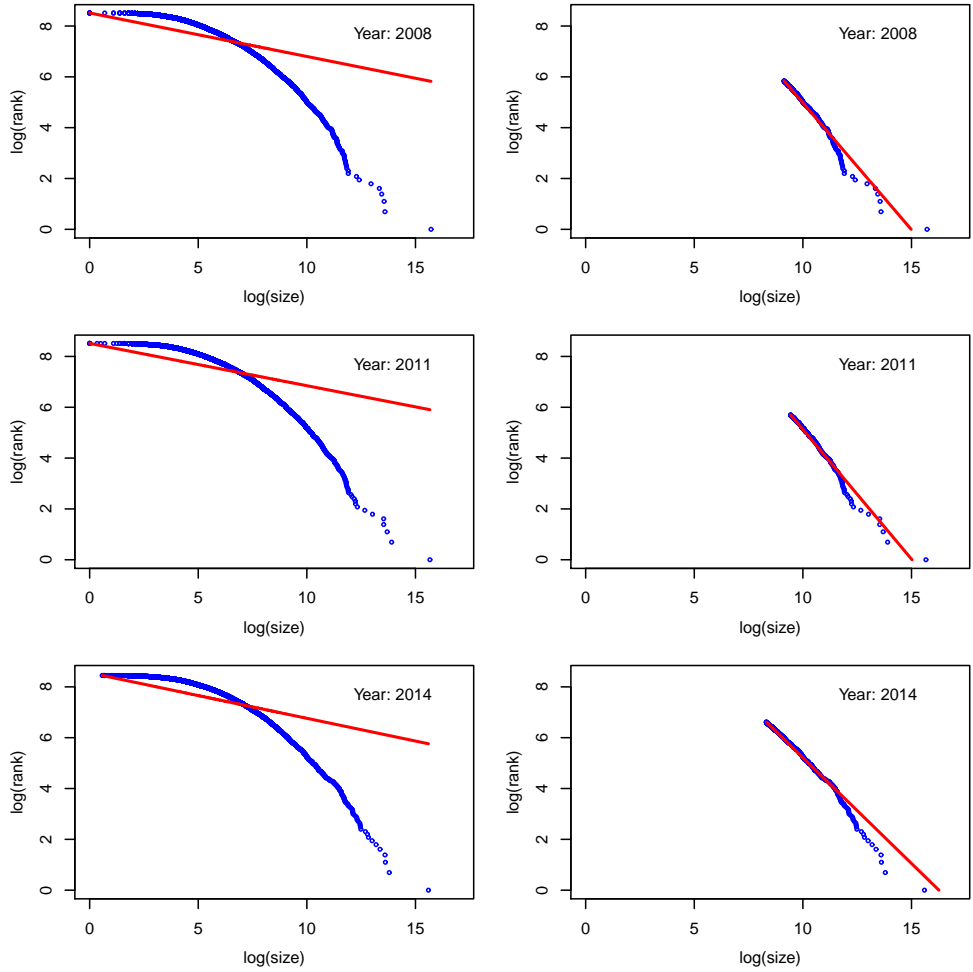}
\caption{\label{fig04}\footnotesize{Rank-size plots of the complementary of the cdf multiplied by $n+1$ (solid lines) of the classical Pareto distribution (power law model) and the observed data, on log-log scale. Left: Whole range. Right: Upper Tail. Data: Debt of the Spanish indebted municipalities in 2008, 2011 and 2014, whose debt was at least one thousand euros, dated on the 31st of December of each year, in thousand of euros, published by the Spanish Ministry of the Finance and Public Administrations.}}
\end{center}
\end{figure}

\subsection{The new distribution in the whole range}

In this section, we compare the ${\cal GPL}(\alpha,\beta,\sigma)$ distribution with other eight models, and we test the adequacy of the ${\cal GPL}(\alpha,\beta,\sigma)$ distribution to the datasets in the whole range. 

We fitted the ${\cal GPL}(\alpha,\beta,\sigma)$ model and those eight known models to the datasets, in the whole range, by maximum likelihood, from 2008 to 2014.
Four of those eight models with two parameters:
Pareto (Power Law); Lomax (Pareto type II with location parameter $\mu=0$) \cite{Lomax1954}; Lognormal \cite{Johnson1994} and Fisk (Log-logistic) \cite{Fisk1961} distributions.
The other four models with three parameters: Pareto type II; three-parameter lognormal; Burr type XII (Singh-Maddala) \cite{Burr1942,Singh1976} and Dagum \cite{Dagum1975} distributions.
Table \ref{distributions} shows the cumulative distribution functions $F(x)$ and the probability density functions $f(x)$ of the nine models considered.
\begin{table}[p]\tiny
\renewcommand{\tablename}{\footnotesize{Table}}
\caption{\label{distributions}\footnotesize Cumulative distribution functions and probability density functions of the models fitted to the dataset in the whole range. $\Phi(z)$ denotes the standard normal CDF.}
\centering
\setlength{\tabcolsep}{2.4 mm}
\begin{tabular}{@{}l c c @{}}
\toprule		
Distribution  & $F(x)$  & $f(x)$   \\
\midrule
Pareto
&	
$1-\left(\displaystyle\frac{x}{\sigma}\right)^{-\alpha}$
&	
$\displaystyle\frac{\alpha \sigma^\alpha}{x^{\alpha+1}}\;x\geq\sigma$
\\[2ex]
Lomax
&	
$1-\left(1+\displaystyle\frac{x}{\sigma}\right)^{-\alpha}$
&	
$\displaystyle\frac{\alpha \sigma^\alpha}{(x+\sigma)^{\alpha+1}}\;x\geq 0$	
\\[2ex]
Lognormal
&	
$\Phi\left(\displaystyle\frac{\log x-\mu}{\sigma}\right)$
&
$\displaystyle\frac{1}{x\sigma\sqrt{2\pi}}\exp\left[-\frac{(\log x-\mu)^2}{2\sigma^2}\right]\;x>0$
\\[2ex]
Fisk
&	
$\displaystyle\frac{1}{1+(x/\alpha)^{-\beta}}$
&	
$\displaystyle\frac{(\beta/\alpha)(x/\alpha)^{\beta-1}}{(1+(x/\alpha)^{\beta})^2}\;x>0$
\\[2ex]
Pareto II
&
$1-\left(1+\displaystyle\frac{x-\mu}{\sigma}\right)^{-\alpha}$
&
$\displaystyle\frac{\alpha \sigma^\alpha}{(x-\mu+\sigma)^{\alpha+1}}\;x\geq \mu$	
\\[2ex]
Lognormal 3p
&	
$\Phi\left(\displaystyle\frac{\log(x-\gamma)-\mu}{\sigma}\right)$
&
$\displaystyle\frac{1}{\sigma(x-\gamma)\sqrt{2\pi}}\exp\left[-\frac{(\log(x-\gamma)-\mu)^2}{2\sigma^2}\right]\;x>\gamma$
\\[2ex]
Burr type XII
&	
$1-\left[1+{\left(\displaystyle\frac{x}{b}\right)}^a\right]^{-q}$
&
$\displaystyle\frac{aqx^{a-1}}{b^{a}[1+(x/b)^a]^{q+1}},\;x\geq 0$
\\[2ex]
Dagum
&	
$\left[1+{\left(\displaystyle\frac{x}{b}\right)}^{-a}\right]^{-p}$
&
$\displaystyle\frac{apx^{ap-1}}{b^{ap}[1+(x/b)^a]^{p+1}},\;x\geq 0$
\\[2ex]
${\cal GPL}(\alpha,\beta,\sigma)$ 
&	
$1-\exp\left\{-\alpha\displaystyle\frac{[\log(x/\sigma)]^{\beta+1}}{[\log(x/\sigma)+1]^\beta}\right\}$
&	
$\displaystyle\frac{\alpha}{x}\left[\displaystyle\frac{\log(x/\sigma)+1+\beta}{\log(x/\sigma)+1}\right]\left[\displaystyle\frac{\log(x/\sigma)}{\log(x/\sigma)+1}\right]^\beta
\exp\left\{-\alpha\displaystyle\frac{[\log(x/\sigma)]^{\beta+1}}{[\log(x/\sigma)+1]^\beta}\right\}$
\\
\bottomrule
\end{tabular}
\end{table}

We compared those models using the Bayesian information criterion ($BIC$, see Eq.(\ref{bic})).
Table \ref{bicst} shows the $BIC$ statistics obtained, from the nine selected models (ranked by $BIC$), corresponding to our datasets in the whole range, from 2008 to 2014.
${\cal GPL}(\alpha,\beta,\sigma)$ distribution presents the largest values of $BIC$ statistics, therefore ${\cal GPL}(\alpha,\beta,\sigma)$ distribution is the model chosen using that model selection criterion.
Table \ref{GPLparameters} shows the corresponding parameter estimates and their standard errors from the ${\cal GPL}(\alpha,\beta,\sigma)$ distribution.

We checked graphically the adequacy of the ${\cal GPL}(\alpha,\beta,\sigma)$ distribution to the datasets in the whole range using rank-size plots. Figure \ref{fig05} shows the plots obtained from 2008 to 2014.

Finally, we tested the goodness-of-fit of ${\cal GPL}(\alpha,\beta,\sigma)$ distribution, by a Kolmogorov-Smirnov ($KS$) test method based on bootstrap resampling. Table \ref{KSpvalue} shows the values of the empirical
$KS$ statistics and the $p$-values obtained. It can be seen that we obtained $p$-values $\geq0.1$ in three of the seven years considered.

In summary, ${\cal GPL}(\alpha,\beta,\sigma)$ distribution can be useful for modeling Spanish municipalities debt: it presents the best $BIC$ statistics of the nine selected models;
graphically it gives a reasonable description of the datasets; and it cannot be rejected with 0.1 level of significance in three of the seven years considered.

\begin{table}[p]\scriptsize
\renewcommand{\tablename}{\footnotesize{Table}}
\caption{\label{bicst}\footnotesize $BIC$ statistics for nine candidate models, fitted by maximum likelihood to municipal debt data in Spain. Larger values indicate better fitted models (models appear ranked by $BIC$).}
\centering
\setlength{\tabcolsep}{4.2 mm}
\begin{tabular}{@{}l c c c c c c c @{}}
\toprule		
Year                             		 	& 2008           	& 2009    		& 2010 		& 2011 		& 2012 		& 2013 		& 2014   \\
\midrule	
${\cal GPL}(\alpha,\beta,\sigma)$	& -39656         	& -41041 		& -40892   	& -40528  		& -42316		& -41925		& -38840 \\
Lognormal 3p					& -39723         	& -41092 		& -40954   	& -40579  		& -42368		& -41974 		& -38859 \\
Dagum			  			& -39708         	& -41093 		& -40955   	& -40578  		& -42380		& -41985 		& -38884 \\
Lognormal	              			& -39733      	& -41101    	& -40966     	& -40587 		& -42378 		& -41984 		& -38877   \\
Pareto II	              				& -39731      	& -41119    	& -40986     	& -40607 		& -42417 		& -42015 		& -38893   \\
Lomax			  			& -39748        	& -41135 		& -41000   	& -40620  		& -42427		& -42026 		& -38914   \\
Burr type XII	              			& -39746      	& -41139    	& -41003     	& -40624 		& -42432 		& -42030		& -38917  \\
Fisk			  				& -39792        	& -41172 		& -41036   	& -40652  		& -42448		& -42046 		& -38927    \\
Pareto	              				& -42870		& -44251   	& -44200    	& -43820 		& -45819 		& -45338 		& -41486   \\
\bottomrule
\end{tabular}
\end{table}

\begin{table}[p]\scriptsize
\renewcommand{\tablename}{\footnotesize{Table}}
\caption{\label{GPLparameters}\footnotesize Parameter estimates from the ${\cal GPL}(\alpha,\beta,\sigma)$ model, to Spanish municipal debt datasets, in the whole range,
by maximum likelihood (standard errors in parenthesis).}
\centering
\setlength{\tabcolsep}{4 mm}
\begin{tabular}{@{}l c c c c c c c @{}}
\toprule		
Year                                   & 2008           	& 2009    		& 2010 		& 2011 		& 2012 		& 2013 		& 2014   \\
\midrule	
$\hat{\alpha}$	              	& 2.3477      	& 2.6126    	& 2.3085     	& 2.6703 		& 2.5098 		& 2.5722 		& 3.0427   \\
					& (0.1910)      	& (0.2336)    	& (0.1979)     	& (0.2419) 	& (0.2387) 	& (0.2445)		& (0.3325)   \\
$\hat{\beta}$			& 24.3346         & 27.4832 	& 24.5065   	& 27.1347  	& 26.1137		& 26.9485 	& 30.5616 \\
					& (1.3320)      	& (1.5758)    	& (1.4319)     	& (1.5735) 	& (1.6499) 	& (1.6755) 	& (2.0480)   \\
$\hat{\sigma}$			& 0.2367         	& 0.1536 		& 0.2536   	& 0.1882  		& 0.2625		& 0.2199		& 0.1335 \\
					& (0.0392)      	& (0.0291)    	& (0.0454)     	& (0.0355) 	& (0.0530) 	& (0.0447) 	& (0.0316)   \\
\bottomrule
\end{tabular}
\end{table}

\begin{figure}[p]
\renewcommand{\figurename}{\footnotesize{Figure}}
\begin{center}
\includegraphics[width=1.0\textwidth]{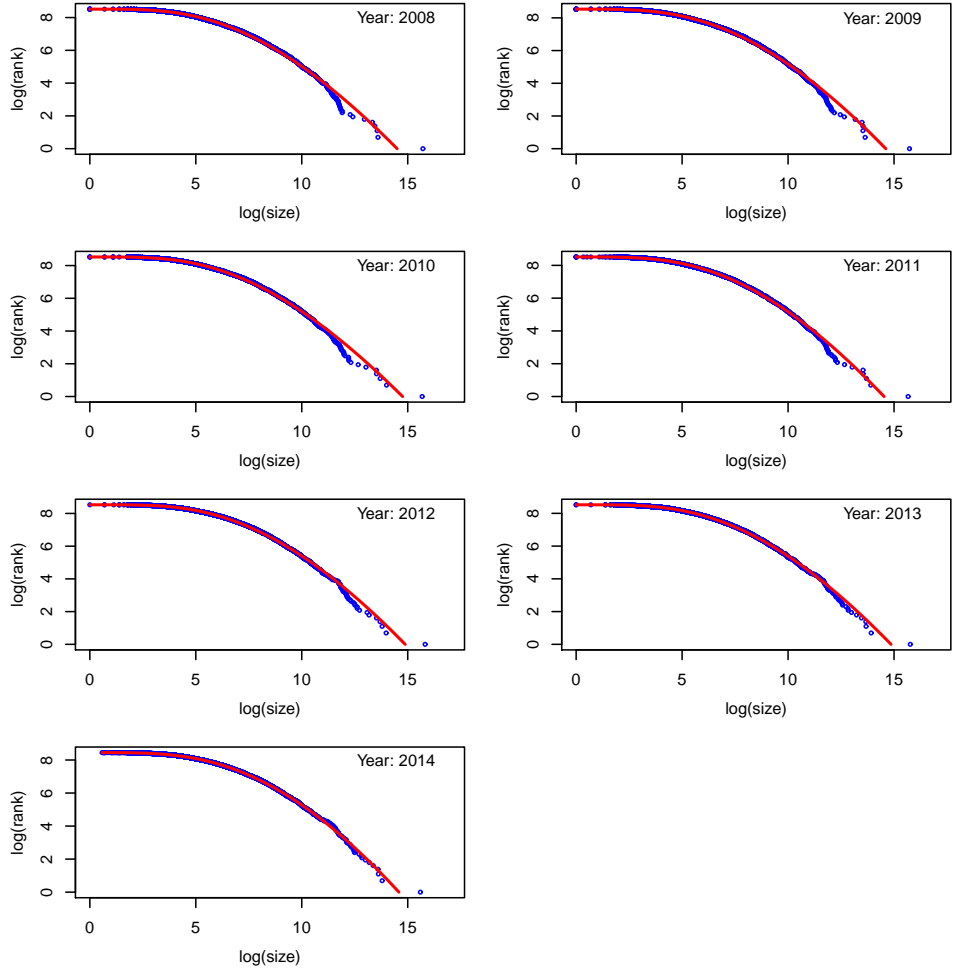}
\caption{\label{fig05}\footnotesize{Rank-size plots of the complementary of the cdf multiplied by n + 1 (solid lines) of the ${\cal GPL}(\alpha,\beta,\sigma)$ distribution and the observed data, on log-log scale.
Data: Debt of the Spanish indebted municipalities, from 2008 to 2014, whose debt was at least one thousand euros, dated on the 31st of December of each year, in thousand of euros,
published by the Spanish Ministry of the Finance and Public Administrations.}}
\end{center}
\end{figure}

\begin{table}[t]\scriptsize
\renewcommand{\tablename}{\footnotesize{Table}}
\caption{\label{KSpvalue}\footnotesize Empirical KS statistics and bootstrap $p$-values for ${\cal GPL}(\alpha,\beta,\sigma)$ model  ($p\geq0.1$ favor GPL model)}
\centering
\setlength{\tabcolsep}{5 mm}
\begin{tabular}{@{}l c c c c c c c @{}}
\toprule		
Year                                   & 2008           	& 2009    		& 2010 		& 2011 		& 2012 		& 2013 		& 2014   \\
\midrule	
$KS$	              		& 0.0180      	& 0.0181    	& 0.0139     	& 0.0144 		& 0.0144 		& 0.0098 		& 0.0093 \\
$p$-value			  	& 0.0000         	& 0.0000 		& 0.1000 		& 0.0048  		& 0.0056		& 0.1977 		& 0.3258 \\
\bottomrule
\end{tabular}
\end{table}

\section{Conclusions}\label{conclusions}

In this study, we focussed on modeling the whole range of empirical data, whose upper tail follows a power-law behaviour. 
To do that, we modeled the exponent of the classical Pareto distribution using a non-linear function.

We found a new family of Generalized Power Law distributions, with three parameters, which includes Pareto (power law, Pareto type I) and PPS distributions as special cases.
We showed that it is a genuine family of distributions. We provided some particular functional forms of that family.
And, as an extension, we presented two more new families of distributions based on Pareto type II and Pareto type IV distributions respectively.

We found a new distribution, the ${\cal GPL}(\alpha,\beta,\sigma)$ distribution, which belongs to the new family of Generalized Power Law distributions described previously. We showed that ${\cal GPL}(\alpha,\beta,\sigma)$ and Power Law models are right tail equivalent (${\cal GPL}(\alpha,\beta,\sigma)$ model exhibit a power law behavior in the tail) and that ${\cal GPL}(\alpha,\beta,\sigma)$ model belongs to the Maximum Domain of Attraction of the Fréchet distribution, which means that ${\cal GPL}(\alpha,\beta,\sigma)$ is a heavy-tailed distribution and it can be useful for statistical modeling of real phenomena with extremely large observations.
We provided the genesis, the basic properties (including the quantile function for computer simulation), and the corresponding estimation and testing methods for that distribution.

Finally, we provided empirical evidence of the efficacy of the ${\cal GPL}(\alpha,\beta,\sigma)$ distribution (and for extension, of the new family of Generalized Power Law distribution) with real datasets in the whole range. In particular, we showed that ${\cal GPL}(\alpha,\beta,\sigma)$ model can be useful for modeling municipal debt data. For that, we considered information of Spanish indebted municipalities, whose debt was at least one thousand euros, in the period 2008-2014, published by the Spanish Ministry of the Finance and Public Administrations. We showed that the Spanish
municipal debt follows a power law in the tail but not in the whole range. And finally, we showed analytically and graphically the competence of the ${\cal GPL}(\alpha,\beta,\sigma)$ distribution with
municipal debt data in the whole range, in comparison with other known distributions as the Lognormal, the Generalized Pareto, the Fisk, the Burr type XII and the Dagum distributions.

\section*{Acknowledgements}
The authors gratefully acknowledge financial support from the Programa Estatal de Fomento de la Investigaci\'on Cient\'ifica y T\'ecnica de Excelencia/Spanish Ministry of Economy and Competitiveness. 
Ref. ECO2013-48326-C2-2-P. In addition, this work is part of the Research Project APIE 1/2015-17: "New methods for the empirical analysis of financial markets" of the Santander Financial Institute (SANFI) of UCEIF Foundation resolved by the University of Cantabria and funded with sponsorship from Banco Santander.

\bibliographystyle{plain}

\begin{thebibliography}{11}

\bibitem{Pinto}
Pinto CM, Lopes AM, Machado JT. A review of power laws in real life phenomena. Communications in Nonlinear Science and Numerical Simulation 2012; 17(9): 3558-3578.

\bibitem{Clauset}
Clauset A, Shalizi CR, Newman ME. Power-law distributions in empirical data. SIAM review 2009; 51(4): 661-703.

\bibitem{Newman2005}
Newman ME. Power laws, Pareto distributions and Zipf's law. Contemporary physics 2005; 46(5): 323-351.

\bibitem{Clementi2006}
Clementi F, Di Matteo T, Gallegati M. The power-law tail exponent of income distributions. Physica A: Statistical Mechanics and its Applications 2006; 370(1): 49-53.

\bibitem{Rosas}
Rosas-Casals M, Solé R. Analysis of major failures in Europe's power grid. International Journal of Electrical Power \& Energy Systems 2011; 33(3): 805-808.

\bibitem{Mayo}
Mayo DG, Cox DR. Frequentist statistics as a theory of inductive inference. Lecture Notes-Monograph Series 2006; 77-97.

\bibitem{Kelton2000}
Kelton WD, Averill ML. Simulation modeling and analysis. Boston: McGraw Hill, 2000.

\bibitem{Prieto2014}
Prieto F, Sarabia JM, Sáez AJ. Modelling major failures in power grids in the whole range. International Journal of Electrical Power \& Energy Systems 2014; 54: 10-16.

\bibitem{Cuadra2015}
Cuadra L, Salcedo-Sanz S, Del Ser J, Jiménez-Fernández S, Geem ZW. A Critical Review of Robustness in Power Grids Using Complex Networks Concepts.
Energies 2015; 8(9): 9211-9265.

\bibitem{Stumpf2005}
Stumpf MP, Ingram PJ, Nouvel I, Wiuf C. Statistical model selection methods applied to biological networks. In: Transactions on Computational Systems Biology III (pp. 65-77). Springer Berlin Heidelberg; 2005.

\bibitem{Arnold1983}
Arnold BC, Pareto Distributions. International Co-operative Publishing House, Fairland, Maryland; 1983.

\bibitem{Arnold2015}
Arnold BC. Pareto distributions, second edition. In: Monographs on statistics and applied probability 2015; 140. CRC Press. Taylor \& Francis Group.

\bibitem{Sarabia2009}
Sarabia JM, Prieto F. The Pareto-Positive stable distribution: A new descriptive model for city size data. Physica A: Statistical Mechanics and its Applications 2009; 388(19): 4179-4191.

\bibitem{Pareto1964}
Pareto V. Cours d'économie politique. Librairie Droz, 1964.

\bibitem{Guillen2011}
Guill\'en M, Prieto F, Sarabia JM. Modelling losses and locating the tail with the Pareto Positive Stable distribution. Insurance: Mathematics and Economics  2011; 3(49): 454-461.

\bibitem{Arnold2008}
Arnold BC. Pareto and generalized pareto distributions. In: Modeling income distributions and Lorenz curves 2008; 119-145. Springer New York.

\bibitem{Arnold2014}
Arnold BC. Univariate and multivariate Pareto models. Journal of Statistical Distributions and Applications 2014; 1(1):1-16.

\bibitem{Embrechts}
Embrechts P, Kl\"uppelberg C, Mikosch T. Modelling extremal events,  vol. 33. Springer Science \& Business Media; 1997.

\bibitem{Focardi}
Focardi SM, Fabozzi FJ. The mathematics of financial modeling and investment management, vol. 138. John Wiley \& Sons; 2004.

\bibitem{Sornette}
Sornette D. Critical Phenomena in natural sciences: chaos, fractals, selforganization and disorder: concepts and tools. Springer Series in Synergetics; 2006.

\bibitem{Resnick2007}
Resnick SI. Heavy-tail phenomena: probabilistic and statistical modeling. Springer Science \& Business Media; 2007.

\bibitem{Castillo2012}
Castillo E. Extreme value theory in engineering. Elsevier; 2012.

\bibitem{Klugman}
Klugman SA, Panjer HH, Willmot GE. Loss models: from data to decisions,  vol. 715. John Wiley \& Sons; 2012.

\bibitem{LeCourtois}
Le Courtois O, Walter C. Extreme financial risks and asset allocation. World Scientific Books; 2014.

\bibitem{Castillo1989}
Castillo E, Galambos J, Sarabia JM. The selection of the domain of attraction of an extreme value distribution from a set of data. In: H\"usler J, Reiss RD, editors. Extreme Value Theory. Lecture
Notes in statistics 1989; 51: 181-190. Springer.

\bibitem{Bassi}
Bassi F, Embrechts P, Kafetzaki M. Risk management and quantile estimation. In: Adler R, Feldman F, Taqqu M, editors. A practical guide to heavy tails 1998; 111-130. Birkh\"auser, Boston.

\bibitem{Asimit}
Asimit AV, Jones BL. Asymptotic tail probabilities for large claims reinsurance of a portfolio of dependent risks. Astin Bulletin 2008; 38(01): 147-159.

\bibitem{Cirillo}
Cirillo P. Are your data really Pareto distributed?. Physica A: Statistical Mechanics and its Applications 2013; 392(23): 5947-5962.

\bibitem{Jayakrishnan}
Nair J, Wierman A, Zwart B. The fundamentals of heavy-tails: properties, emergence, and identification. In: Proceedings of the Acm Sigmetrics International Conference on Measurement and Modeling of Computer Systems 2013; 41(1): 387-388.

\bibitem{Gorge}
Gorge G. Insurance risk management and reinsurance. Lulu. com; 2013.

\bibitem{Resnick2013}
Resnick SI. Extreme values, regular variation and point processes. Springer; 2013.

\bibitem{vonMises}
Von Mises, R. La distribution de la plus grande de n valeurs. Rev. math. Union interbalcanique 1936; 1(1).

\bibitem{Tyszer2012}
Tyszer, J. Object-oriented computer simulation of discrete-event systems (Vol. 10). Springer Science \& Business Media, 2012.

\bibitem{Glasserman2003}
Glasserman, P. Monte Carlo methods in financial engineering (Vol. 53). Springer Science \& Business Media, 2003

\bibitem{Fisher1922}
Fisher RA. On the mathematical foundations of theoretical statistics. Philos Trans Roy Soc Ser A 1922; 222: 309-368.

\bibitem{Rproject}
R Development Core Team. R: A language and environment for statistical computing, R Foundation for Statistical Computing, Vienna, Austria, 2011.
http://www.R-project.org/; [accessed 19.02.16].

\bibitem{Nash}
Nash JC, Varadhan R. Unifying optimization algorithms to aid software system users: optimx for R. Journal of Statistical Software 2011; 43(9): 1-14.

\bibitem{Byrd}
Byrd RH, Lu P, Nocedal J, Zhu CY (1995). A limited memory algorithm for bound constrained optimization. SIAM Journal on Scientific Computing 1995; 16(5): 1190-1208.

\bibitem{Barros}
Barros M, Paula GA, Leiva V. An R implementation for generalized Birnbaum-Saunders distributions. Computational Statistics and Data Analysis 2009; 53: 1511-1528.

\bibitem{Lange}
Lange K. Numerical Analysis for Statisticians. Springer, New York, 2000.

\bibitem{Akaike}
Akaike H. A new look at the statistical model identification. Automatic Control, IEEE Transactions on 1974; 19: 716-723.

\bibitem{Schwarz}
Schwarz G. Estimating the dimension of a model. The Annals of Statistics 1978; 5: 461-464.

\bibitem{Efron}
Efron B. Bootstrap methods: another look at the jackknife. Annals of Statistics 1979; 7(1): 1-26.

\bibitem{Wang}
Wang C, Zeng B, Shao J. Application of bootstrap method in Kolmogorov-Smirnov test. Quality, Reliability, Risk, Maintenance, and Safety Engineering (ICQR2MSE) 2011; 287-91.

\bibitem{Babu}
Babu GJ, Rao CR. Goodness-of-fit tests when parameters are estimated. Sankhya 2004; 66: 63–74.

\bibitem{Kolmogorov}
Kolmogorov AN. Sulla Determinazione Empirica di una Legge di Distribuzione, Giornale dell'Istituto degli Attuari 1933; 4: 83-91.

\bibitem{Smirnov}
Smirnov N. On the estimation of the discrepancy between empirical curves of distribution for two independent samples. Bull. Math. Univ. Moscou, 2, fasc 2, 1939.

\bibitem{Castillo2005}
Castillo E, Hadi AS, Balakrishnan N, Sarabia JM. Extreme value and related models with applications in engineering and science. John Wiley \& Sons; 2005.

\bibitem{DGCL}
Spanish Ministry of Finance and Public Admin. Local Government in  Spain; 2008.
http://www.seap.minhap.gob.es/web/publicaciones.html; [accessed 19.02.16].

\bibitem{Benito2004}
Benito B., Bastida F. The determinants of the municipal debt policy in Spain. Journal of Public Budgeting, Accounting and Financial Management 2004; 16(4): 525-558.

\bibitem{INE}
Spanish National Statistics Institute (INE), Spanish Official Municipal Register. http://www.ine.es; [accessed 19.02.16].

\bibitem{Montesinos2000}
Montesinos V, Vela JM. Governmental accounting in Spain and the European Monetary Union: A critical perspective. Financial Accountability \& Management 2000; 16(2): 129-150.

\bibitem{Alba2003}
Alba C, Navarro C. Twenty-five years of democratic local government in Spain. In: Reforming local government in Europe 2003; 197-220. VS Verlag f\"ur Sozialwissenschaften.

\bibitem{Sole2006}
Solé-Ollé A. The effects of party competition on budget outcomes: Empirical evidence from local governments in Spain. Public Choice 2006; 126(1-2): 145-176.

\bibitem{Cabases2007}
Cabasés F, Pascual, P, Vallés J. The effectiveness of institutional borrowing restrictions: Empirical evidence from Spanish municipalities. Public Choice 2007; 131(3-4): 293-313.

\bibitem{Bastida2009}
Bastida F, Benito B, Guillamón MD. An empirical assessment of the municipal financial situation in Spain. International Public Management Journal 2009; 12(4): 484-499.

\bibitem{Hita2011}
Hita FC, Orayen RE, Arzoz PP. Municipal indebtedness in Spain revisited: the impact of borrowing limits and urban development. In: XVIII Encuentro de economía pública (p. 9), 2011.

\bibitem{GarciaSanchez2011}
Garcia-Sanchez IM, Prado-Lorenzo JM, Cuadro-Ballesteros B. Do progressive governments undertake different debt burdens? Partisan vs. electoral cycles. Revista de Contabilidad, 2011; 14 (1): 29-57.

\bibitem{GarciaSanchez2012}
Garcia-Sanchez IM, Mordan N, Prado-Lorenzo J. Effect of the political system on local financial condition: Empirical evidence for Spain's largest municipalities. Public Budgeting \& Finance 2012; 32(2): 40-68.

\bibitem{Lopez2012}
Lopez-Hernandez AM, Zafra-Gomez  JL, Ortiz-Rodriguez D. Effects of the crisis in Spanish municipalities' financial condition: an empirical evidence (2005-2008). International Journal of Critical Accounting 2012; 4(5-6): 631-645.

\bibitem{Almendral2013}
Almendral VR. The Spanish legal framework for curbing the public debt and the deficit. European Constitutional Law Review 2013; 9(02): 189-204.

\bibitem{Benito2015}
Benito B, Vicente C, Bastida F. The Impact of the Housing Bubble on the Growth of Municipal Debt: Evidence from Spain. Local Government Studies 2015; 41(6): 997-1016.

\bibitem{MFPA}
Spanish Ministry of the Finance and Public Administrations. http://www.minhap.gob.es/; [accessed 19.02.16].

\bibitem{Clauset2009}
Clauset A, Shalizi CR, Newman MEJ. Power-law distributions in empirical data. SIAM Rev 2009; 51(4): 661-703.

\bibitem{Hill1975}
Hill BM. A simple general approach to inference about the tail of a distribution. The Annals of Statistics 1975; 3(5): 1163-1174.

\bibitem{Brakman}
Brakman S, Garretsen H, van Marrewikj  C, van de Berg M. The return of Zipf: towards a further understanding of the Rank-Size distribution.
Journal of Regional Science 1999; 39(1): 182-213.

\bibitem{Gabaix1999a}
Gabaix X. Zipf's law and the growth of cities. American Economic Review 1999; 89(2): 129-132.

\bibitem{Gabaix1999b}
Gabaix X. Zipf's law for cities: An explanation. The Quarterly Journal of Economics 1999; 114: 739-767.

\bibitem{Urzua}
Urz\'ua CM. A simple and efficient test for Zipf's law. Economics Letters 2000, 66: 257-260.

\bibitem{Ioannides}
Ioannides YM, Overman HG.  Zipf's law for cities: an empirical examination. Regional Science and Urban Economics 2003; 33: 127-137.

\bibitem{Fujiwara}
Fujiwara Y, Guilmi CD, Aoyama H, Gallegati M, Souma W. Do Pareto-Zipf and Gibrat laws hold true? An analysis with European firms. Physica A: Statistical Mechanics and its Applications  2004; 335: 197-216.

\bibitem{Gabaix2004}
Gabaix X, Ioannides YM. The evolution of city size distributions. In: Henderson JV, Thisse JF, editors. Handbook of Regional and Urban Economics 2004; 4: 2341-2378. Elsevier, Amsterdam.

\bibitem{Anderson}
Anderson H, Ge Y. The size distribution of Chinese cities. Regional Science and Urban Economics 2005; 35: 756-776.

\bibitem{Cordoba}
C\'ordoba JC. On the distribution of city sizes. Journal of  Urban Economics 2008; 63: 177-197

\bibitem{Lomax1954}
Lomax KS. Business failures; another example of the analysis of failure data. J Am Stat Assoc 1954; 49: 847-852.

\bibitem{Johnson1994}
Johnson NL, Kotz S, Balakrishnan N. Continuous univariate distributions, vol.1. New York: John Wiley; 1994.

\bibitem{Fisk1961}
Fisk PR. The graduation of income distributions. Econometrica 1961; 29: 171-185.

\bibitem{Burr1942}
Burr IW. Cumulative frequency functions. The Annals of Mathematical Statistics  1942; 13(2): 215-232.

\bibitem{Singh1976}
Singh S, Maddala G. A function for size distribution of incomes. Econometrica 1976; 44 (5): 963-970.

\bibitem{Dagum1975}
Dagum C. A model of income distribution and the conditions of existence of moments of finite order. In: Proceedings of the 40th session of the International Statistical Institute 1975; 46: 199-205.

\end{thebibliography}

\end{document}